\title{The Generalized Reed-Muller codes in a modular group algebra}
\author{Harinaivo ANDRIATAHINY\\Mention: Mathématiques et informatique,\\Domaine: Sciences et Technologies,\\Université d'Antananarivo, Madagascar\\e-mail: hariandriatahiny@gmail.com\\\\Vololona Harinoro RAKOTOMALALA\\Mention: Météorologie,\\Domaine: Sciences de l'Ingénieur,\\Ecole Supérieure Polytechnique d'Antananarivo,\\ Université d'Antananarivo, Madagascar\\e-mail: volhrakoto@gmail.com}
\theoremstyle{plain}
\newtheorem{thm}{Theorem}[section]
\newtheorem{prop}[thm]{Proposition}
\newtheorem{lem}[thm]{Lemma}
\newtheorem{cor}[thm]{Corollary}
\theoremstyle{definition}
\newtheorem{defi}[thm]{Definition}
\newtheorem{rem}[thm]{Remark}
\DeclareMathOperator{\card}{card}
\DeclareMathOperator{\ann}{Ann}
\DeclareMathOperator{\md}{md}
\DeclareMathOperator{\LT}{LT}
\begin{document}

\maketitle

\begin{abstract}
First we study some properties of the modular group algebra $\mathbb{F}_{p^r}[G]$ where $G$ is the additive group of a Galois ring of characteristic $p^r$ and $\mathbb{F}_{p^r}$ is the field of $p^r$ elements. Secondly a description of the Generalized Reed-Muller codes over $\mathbb{F}_{p^r}$ in $\mathbb{F}_{p^r}[G]$ is presented.
\end{abstract}
Keywords: Generalized Reed-Muller codes, group algebra, Galois ring, Jennings basis.\\
MSC 2010: 94B05, 94B15, 20C05, 16N40, 12E05.
\section{Introduction}
It is well known that the Reed-Muller codes of length $p^m$ over the prime field $\mathbb{F}_{p}$ are the radical powers of the modular group algebra $\mathbb{F}_{p}[\mathbb{F}_{p^m}]$ ([3],[4]).\\
In general, this property is no longer true for the Generalized Reed-Muller (GRM) codes over a non prime field $\mathbb{F}_{p^{r}} (r>1)$. Let $M$ be the radical of the modular group algebra $\mathbb{F}_{p^r}[G]$ where $G$ is the additive group of the Galois ring $GR(p^r,m)$ of characteristic $p^r$ and of rank $m$. Since $\mathbb{F}_{p^r}[G]$ is isomorphic to $\mathbb{F}_{p^r}[X_{0},X_{1},\ldots ,X_{m-1}]/(X_{0}^{p^r}-1,\ldots ,X_{m-1}^{p^r}-1)$ (see Proposition 5.1), then as shown in [1], except for $M^{0}$, $M$ and $M^{m(p^{r}-1)}$, none of the radical powers of $\mathbb{F}_{p^r}[G]$ is a GRM code over the non prime field $\mathbb{F}_{p^r}$.\\
Many authors have described the GRM codes in group algebras, especially in the group algebra $\mathbb{F}_{p^r}[H]$ where $H$ is an elementary abelian p-group or $H$ is the additive group of an extension field of $\mathbb{F}_{p^r}$ ([4],[5],[9]).\\
We give a description of the GRM codes over an arbitrary finite field $\mathbb{F}_{p^r}$ in the group algebra $\mathbb{F}_{p^r}[G]$ where $G$ is the additive group of the Galois ring $GR(p^r,m)$.\\
We organize this paper as follows. In section 2, we recall the definition and some properties of the Galois ring $GR(p^r,m)$. In section 3, we summarize some results on the group algebra $\mathbb{F}_{p^r}[\mathbb{F}_{p^m}]$ developed by Assmus and Key in [2]. In particular, some facts about a linear basis of the group algebra $\mathbb{F}_{p^r}[\mathbb{F}_{p^m}]$ called Jennings basis ([7]) are pointed out. In section 4, We give some generalizations of the properties of $\mathbb{F}_{p^r}[\mathbb{F}_{p^m}]$ presented in section 3 by considering the group algebra $\mathbb{F}_{p^r}[G]$ where $G$ is the additive group of the Galois ring $GR(p^r,m)$. In section 5, we establish a natural isomorphism between $\mathbb{F}_{p^r}[G]$ and $\mathbb{F}_{p^r}[X_{0},X_{1},\ldots ,X_{m-1}]/(X_{0}^{p^r}-1,\ldots ,X_{m-1}^{p^r}-1)$. In section 6, we describe the GRM codes over a finite field $\mathbb{F}_{p^r}$ in $\mathbb{F}_{p^r}[G]$ by using the fact that GRM codes are extended cyclic codes. In section 7, an example is given.

\section{Galois ring}
Let us start with some basic properties of Galois ring (see [10]).\\
Set $\mathbb{Z}_{p^r}:=\mathbb{Z}/{p^r}\mathbb{Z}$ and let $h(X)\in \mathbb{Z}_{p^r}[X]$ be a monic basic irreducible polynomial of degree $m$. Such a polynomial always exists by Hensel's Lemma. The Galois ring of characteristic $p^r$ and of rank $m$, denoted by $GR(p^r,m)$, is defined as the quotient ring
\begin{equation}\label{galring}
\mathbb{Z}_{p^{r}}[X] / (h(X)).
\end{equation}
In particular, $GR(p^r,1)=\mathbb{Z}_{p^r}$ and $GR(p^1,m)=\mathbb{F}_{p^m}$.\\
$GR(p^r,m)$ is a local ring with maximal ideal $pGR(p^r,m)$ and residue field
\begin{center}
$GR(p^r,m)/pGR(p^r,m) \cong \mathbb{F}_{p^m}$.
\end{center}
The Galois ring $GR(p^r,m)$ is a finite chain ring; all the ideals of $GR(p^r,m)$ are ordered as follows:
\begin{center}
$\left\{0\right\}=p^{r}GR(p^r,m)\subset p^{r-1}GR(p^r,m)\subset \ldots \subset pGR(p^r,m)\subset GR(p^r,m)$.
\end{center}
Each element $v\in GR(p^r,m)$ can be expressed as $v=p^{t}u$ where $u$ is an invertible element in $GR(p^r,m)$ and in this expression $t$ is unique up to modulo $p^{r-t}$.\\
Finally, we recall that if $m_{1}$ is a positive integer such that $m_{1}\mid m$, then $GR(p^r,m_{1})$ is a subring of $GR(p^r,m)$. Conversely, any subring of $GR(p^r,m)$ is of the form $GR(p^r,m_{1})$ for some $m_{1}\mid m$.

\section{The group algebra $\mathbb{F}_{p^r}[\mathbb{F}_{p^m}]$}
We recall in this section some properties of the group algebra $\mathbb{F}_{p^r}[\mathbb{F}_{p^m}]$ treated in [2]. Let $F:=\mathbb{F}_{p^r}$ be any subfield of $\mathbb{F}_{p^m}$. Set $G$ equal to the additive group of the finite field $\mathbb{F}_{p^m}$. Thus $G$ can be viewed as an elementary abelian $p$-group of order $p^{m}$. Set
\begin{center}
$R=F[G]$
\end{center}
the group algebra of $G$ over the field $F$.  Using polynomial notation, an element of $R$, which is a function from $G$ to $F$, is a formal sum $\sum_{g\in G}a_{g}X^{g}$ where the coefficient $a_{g}$ is simply the value of the function at the group element $g$.\\
The operations in the group algebra $R$ are defined as follows:
\begin{equation*}
\sum_{g\in G}a_{g}X^{g}+\sum_{g\in G}b_{g}X^{g}=\sum_{g\in G}(a_{g}+b_{g})X^{g};
\end{equation*}
\begin{equation*}
(\sum_{g\in G}a_{g}X^{g})(\sum_{g\in G}b_{g}X^{g})=\sum_{g,h\in G}a_{g}b_{h}X^{g+h}=\sum_{k\in G}(\sum_{h\in G}a_{k-h}b_{h})X^{k};
\end{equation*}
and, for $c\in F$,
\begin{equation*}
c(\sum_{g\in G}a_{g}X^{g})=\sum_{g\in G}(ca_{g})X^{g}.
\end{equation*}
$X^{0}$ is the multiplicative identity of the group algebra $R$; i.e $X^{0}a=a$ for every $a\in R$. We will simply denote it by $1$.\\
The augmentation map
\begin{equation}
\begin{aligned}
\phi :\quad  R&\longrightarrow F \\
        \sum_{g\in G}a_{g}X^{g} &\longmapsto \sum_{g\in G}a_{g}
\end{aligned}
\end{equation}
is an algebra homomorphism.\\
The kernel of $\phi$ is denoted by
\begin{equation}
M=\left\{\sum_{g\in G}a_{g}X^{g}\in R\mid \sum_{g\in G}a_{g}=0\right\}.
\end{equation}
Thus, $M$ is an ideal of $R$.\\
Since $G$ is an elementary abelian $p$-group, we have
\begin{equation}
(\sum_{g\in G}a_{g}X^{g})^{p}=\sum_{g\in G}(a_{g})^{p}X^{pg}=\sum_{g\in G}(a_{g})^{p}X^{0}=(\sum_{g\in G}a_{g})^{p}X^{0}.
\end{equation}
\begin{prop}
$R$ is a local group algebra with unique maximal ideal\\$M=Rad(R)=\left\{a\in R\mid a^{p}=0\right\}$.
\end{prop}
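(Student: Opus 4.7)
The plan is to exploit equation (4), which says $a^{p}=\phi(a)^{p}\cdot 1$ for every $a\in R$, as the single computational input, and then deduce everything about $M$ formally.

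First I would verify the three displayed equalities in the statement in the order: $M=\{a\in R\mid a^{p}=0\}$, then $R$ is local with maximal ideal $M$, then $M=\operatorname{Rad}(R)$. For the first, given $a\in R$, equation (4) gives $a^{p}=\phi(a)^{p}\cdot 1$; since $F$ is a field, $a^{p}=0$ is equivalent to $\phi(a)=0$, i.e.\ to $a\in M=\ker\phi$. This establishes the set-theoretic description of $M$ and also shows $M$ is a nil ideal (every element has $p$-th power zero).

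Next I would show $R$ is local with $M$ its unique maximal ideal. Because $\phi:R\to F$ is a surjective algebra homomorphism onto a field, $M=\ker\phi$ is automatically a maximal ideal with $R/M\cong F$. To get uniqueness, I show every element outside $M$ is a unit: if $a\notin M$ then $c:=\phi(a)\neq 0$ in $F$, so by (4) we have $a^{p}=c^{p}\cdot 1$, and hence
\begin{equation*}
a\cdot\bigl(c^{-p}a^{p-1}\bigr)=c^{-p}a^{p}=1,
\end{equation*}
so $a$ is invertible. Thus $R\setminus M$ consists entirely of units, which is the standard characterization of a local ring with maximal ideal $M$.

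Finally, $M=\operatorname{Rad}(R)$ follows from either of two observations, and I would just cite the cleaner one: in a local ring the Jacobson radical coincides with the unique maximal ideal. Alternatively, $M$ is a nil ideal hence contained in $\operatorname{Rad}(R)$, and being maximal forces equality. I don't anticipate any real obstacle here; the only point that needs a line of care is producing the explicit inverse $c^{-p}a^{p-1}$ from (4) to get that non-augmentation-zero elements are units, which is what powers the locality conclusion.
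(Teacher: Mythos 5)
Your proof is correct and follows essentially the same route as the paper's argument (given explicitly for the analogous Proposition 4.1): use the identity $a^{p}=\phi(a)^{p}\cdot 1$ to characterize $M$ as the set of elements with vanishing $p$-th power, and observe that every element outside $M$ is a unit, so $M$ is the unique maximal ideal and hence the radical. Your explicit inverse $c^{-p}a^{p-1}$ just makes concrete a step the paper leaves implicit.
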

\begin{rem}
Let $p$ be a prime number. Then
\begin{enumerate}
\item $(-1)^{p-1}=1 \mbox{ mod } p$.
\item $\binom{p-1}{i}=(-1)^{i} \mbox{ mod } p$ for all $i=0,1,\ldots,p-1$.
\end{enumerate}
\end{rem}
\begin{prop}
Provided $g\neq 0$, we have
\begin{equation*}
(X^{g}-1)^{p-1}=\sum_{i=0}^{p-1}X^{ig}
\end{equation*}
\end{prop}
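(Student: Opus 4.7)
The plan is to expand $(X^g-1)^{p-1}$ directly by the binomial theorem and then collapse the coefficients using the two congruences collected in Remark 3.2. Since $G$ is abelian, the group algebra $R$ is commutative, so $X^g$ and $1$ commute and the binomial expansion is valid. Writing it out gives
\begin{equation*}
(X^{g}-1)^{p-1}=\sum_{i=0}^{p-1}\binom{p-1}{i}(X^{g})^{i}(-1)^{p-1-i}=\sum_{i=0}^{p-1}\binom{p-1}{i}(-1)^{p-1-i}X^{ig}.
\end{equation*}

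Next I would simplify the scalar coefficient in $F=\mathbb{F}_{p^r}$ of characteristic $p$. By part (1) of Remark 3.2, $(-1)^{p-1}=1$, so $(-1)^{p-1-i}=(-1)^{-i}=(-1)^{i}$. Combining with part (2), $\binom{p-1}{i}=(-1)^{i}$, we obtain
\begin{equation*}
\binom{p-1}{i}(-1)^{p-1-i}=(-1)^{i}\cdot(-1)^{i}=(-1)^{2i}=1
\end{equation*}
in $F$ for every $i\in\{0,1,\dots,p-1\}$. Substituting back yields exactly $\sum_{i=0}^{p-1}X^{ig}$, which is the desired identity.

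There is really no obstacle here: the proof is a direct one-line binomial expansion followed by a coefficient simplification. The hypothesis $g\neq 0$ plays no role in the algebraic manipulation itself (both sides also agree, being zero, when $g=0$ since then $\sum_{i=0}^{p-1}X^{ig}=p\cdot 1=0$ in characteristic $p$); the assumption is made because this is the case of interest, namely when the elements $0,g,2g,\dots,(p-1)g$ are distinct in $G$ and the right-hand side is a genuine sum of $p$ distinct basis vectors of $R$.
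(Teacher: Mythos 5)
Your proof is correct and follows essentially the same route as the paper: the paper states this proposition without proof in Section 3 (it is recalled from Assmus--Key), but proves the direct generalization, Proposition 4.5 with $p^{r}$ in place of $p$, by exactly your argument --- binomial expansion followed by the coefficient simplification $\binom{p^{r}-1}{i}(-1)^{p^{r}-1-i}=1$ via Lemma 4.4 (the analogue of Remark 3.2). Your closing observation about the role of the hypothesis $g\neq 0$ is a small accurate addition not present in the paper.
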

\begin{rem}
Let $U$ be a non-empty subset of $G$. Then
\begin{equation*}
\prod_{g\in U}(X^{g}-1)^{p-1}=\begin{cases}
        \sum_{g\in \left\langle U\right\rangle}X^{g}&\text{if $U$ is a linearly independent set}\\
        0&\text{otherwise}
\end{cases}
\end{equation*}
where $\left\langle U\right\rangle$ is the subspace spanned by $U$.
\end{rem}
Let $S$ be a subset of $R=F[G]$. The annihilator of $S$ is defined by
\begin{center}
$\ann(S):=\left\{a\in R\mid as=0\;\; \text{for all}\;\; s\in S\right\}$.
\end{center}
\begin{thm}
Let $G$ be an elementary abelian $p$-group of order $p^{m}$, $F$ a field of characteristic $p$ and $R=F[G]$.
\begin{enumerate}
\item For any basis $\left\{g_{0},\ldots ,g_{m-1}\right\}$ of $G$, the $p^{m}$ elements $\prod_{\nu =0}^{m-1}(X^{g_{\nu}}-1)^{e_{\nu}}$ where $0\leq e_{\nu}<p$ form a linear basis for $R$.
\item Moreover, $\left\{\prod_{\nu =0}^{m-1}(X^{g_{\nu}}-1)^{e_{\nu}} \mid \sum_{\nu =0}^{m-1}e_{\nu}\geq t , 0\leq e_{\nu}<p\right\}$ form a linear basis of $M^{t}$, where $M$ is the radical of $R$.
\item $\ann(M^{t})=m^{m(p-1)+1-t}$.
\end{enumerate}
\end{thm}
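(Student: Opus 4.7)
The plan is to treat the three parts in order, with the bulk of the work going into part (3). For (1), since $\dim_{F} R = |G| = p^{m}$ matches the cardinality of the proposed family, it suffices to exhibit a spanning set. Writing each $g \in G$ as $g = \sum_{\nu} a_{\nu} g_{\nu}$ with $0 \leq a_{\nu} < p$ and using $X^{g} = \prod_{\nu}(1+(X^{g_{\nu}}-1))^{a_{\nu}}$, the binomial theorem expresses $X^{g}$ as a combination of monomials $\prod_{\nu}(X^{g_{\nu}}-1)^{e_{\nu}}$ with $0 \leq e_{\nu} < p$, so $\{X^{g}\}_{g\in G}$ lies in the span. For (2), each proposed element lies in $M^{t}$ because $X^{g_{\nu}}-1\in M$ and there are at least $t$ factors, and linear independence follows from (1). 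For the reverse inclusion, $M$ is spanned by $\{X^{g}-1 : g\in G\}$ (since $\sum a_{g}=0$ implies $\sum a_{g}X^{g}=\sum a_{g}(X^{g}-1)$), so $M^{t}$ is spanned by $t$-fold products of such generators; each generator expands, via the identity above, into a combination of Jennings monomials of weight $\geq 1$, and multiplying $t$ of them while using $(X^{g_{\nu}}-1)^{p}=0$ to kill any term with an exponent $\geq p$ leaves only monomials of weight $\geq t$ with each $e_{\nu}<p$.

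For part (3), the key idea is to equip $R$ with the symmetric bilinear form $B(a,b):=\pi(ab)$, where $\pi:R\to F$ is the linear functional that reads off the coefficient of $\prod_{\nu}(X^{g_{\nu}}-1)^{p-1}$ in the Jennings basis of (1). The main computation is the non-degeneracy of $B$: given a nonzero $a=\sum_{e}c_{e}\prod_{\nu}(X^{g_{\nu}}-1)^{e_{\nu}}$ with some $c_{e^{*}}\neq 0$, I would take $b:=\prod_{\nu}(X^{g_{\nu}}-1)^{p-1-e_{\nu}^{*}}$. Then in the Jennings expansion of $ab$, the term indexed by $e$ is $c_{e}\prod_{\nu}(X^{g_{\nu}}-1)^{e_{\nu}+p-1-e_{\nu}^{*}}$, which vanishes whenever some $e_{\nu}>e_{\nu}^{*}$ (via $(X^{g_{\nu}}-1)^{p}=0$) and contributes to the top monomial $\prod_{\nu}(X^{g_{\nu}}-1)^{p-1}$ only when $e=e^{*}$. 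Hence $B(a,b)=c_{e^{*}}\neq 0$.

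Two standard consequences then close the argument. First, since $R$ is commutative and $M^{t}$ is an ideal, $aM^{t}=0\iff B(a,M^{t})=0$: the nontrivial direction uses that for $b\in M^{t}$ and $c\in R$ one has $bc\in M^{t}$, so $\pi((ab)c)=\pi(a(bc))=0$, which forces $ab=0$ by non-degeneracy. Thus $\ann(M^{t})=(M^{t})^{\perp}$, and hence $\dim\ann(M^{t})=p^{m}-\dim M^{t}$. Second, by (2), $\dim M^{s}$ equals the number of $(e_{\nu})$ with $0\leq e_{\nu}<p$ and $\sum e_{\nu}\geq s$, and the involution $(e_{\nu})\mapsto(p-1-e_{\nu})$ yields $\dim M^{t}+\dim M^{m(p-1)+1-t}=p^{m}$. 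Combined with the inclusion $M^{m(p-1)+1-t}\cdot M^{t}\subset M^{m(p-1)+1}=0$, so $M^{m(p-1)+1-t}\subset\ann(M^{t})$, the matching dimensions force equality.

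The main obstacle is the non-degeneracy of $B$ in part (3); once that Frobenius-style pairing is in hand, the identification $\ann(M^{t})=(M^{t})^{\perp}$ and the dimension count are essentially formal, and parts (1) and (2) reduce to bookkeeping with the binomial theorem together with the relation $(X^{g_{\nu}}-1)^{p}=0$.
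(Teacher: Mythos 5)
Your proof is correct. Note that the paper states this theorem without proof, citing Assmus--Key; its own arguments appear only for the Galois-ring generalizations in Section 4 (Proposition 4.7, Corollaries 4.8--4.9, Proposition 4.11), so those are the natural point of comparison. The core computation is the same in both: multiplying by the complementary Jennings monomial $\prod_{\nu}(X^{g_{\nu}}-1)^{p-1-e_{\nu}^{*}}$ kills every term except the matched one, which lands on $\prod_{\nu}(X^{g_{\nu}}-1)^{p-1}=\sum_{g\in G}X^{g}\neq 0$. The difference is in the packaging. For part (1) the paper proves linear independence directly by this elimination and then counts, whereas you prove spanning via the binomial expansion of $X^{g}=\prod_{\nu}(1+(X^{g_{\nu}}-1))^{a_{\nu}}$ and then count; both are fine. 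For part (3) the paper runs the elimination coefficient-by-coefficient, descending through the weights from $m(p-1)$ down to $t$, to show directly that an annihilating element has no Jennings component of weight at most $m(p-1)-t$; you instead organize the same pairing into a non-degenerate form $B(a,b)=\pi(ab)$ (a Frobenius-algebra structure on $R$), identify $\ann(M^{t})$ with $(M^{t})^{\perp}$, and finish with the dimension symmetry $\dim M^{t}+\dim M^{m(p-1)+1-t}=p^{m}$ coming from the involution $e_{\nu}\mapsto p-1-e_{\nu}$. Your version is more conceptual and makes the self-duality of the radical filtration explicit; the paper's is more elementary and transfers verbatim to its $p^{r}$ setting. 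Two minor points: in part (2) you should remark that the linear span of $t$-fold products of elements of $M$ is already an ideal (since $R=F\cdot 1+M$), so it equals $M^{t}$; and the statement's $m^{m(p-1)+1-t}$ is a typo for $M^{m(p-1)+1-t}$, which is what you prove.
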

\begin{defi}
The basis for $F[G]$ given in theorem 3.5. is called a Jennings basis of the group algebra.
\end{defi}

\section{The group algebra $\mathbb{F}_{p^r}[GR(p^{r},m)]$}
In this section, we give some natural generalizations of the properties of the group algebra $\mathbb{F}_{p^r}[\mathbb{F}_{p^m}]$ formulated in section 3.\\
Let $F$ be the field $\mathbb{F}_{p^{r}}$ of $p^r$ elements where $p$ is a prime number and $r\geq 1$ is an integer. Set $G$ equal to the additive group of the Galois ring $GR(p^r,m)$. Set
\begin{equation}\label{groupalg}
R=F[G]=\left\{\sum_{g\in G}a_{g}X^{g} \mid a_{g}\in F\right\}
\end{equation}
the group algebra of $G$ over the field $F$. The operations in $R$ are defined as in section 3.\\
We consider the ideal $M:=\left\{\sum_{g\in G}a_{g}X^{g}\in R\mid \sum_{g\in G}a_{g}=0\right\}$ of $R$.\\
Since $G$ is of characteristic $p^{r}$, we have
\begin{equation}
(\sum_{g\in G}a_{g}X^{g})^{p^{r}}=\sum_{g\in G}(a_{g})^{p^{r}}X^{p^{r}g}=\sum_{g\in G}a_{g}X^{0}=(\sum_{g\in G}a_{g})X^{0}.
\end{equation}
\begin{prop}
$R$ is a local group algebra with unique maximal ideal
\begin{center}
$M=Rad(R)=\left\{a\in R\mid a^{p^{r}}=0\right\}$.
\end{center}
\end{prop}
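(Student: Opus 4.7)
The plan is to mirror the proof of Proposition~3.1, adapting the key identities to the setting where $G$ has exponent $p^{r}$ rather than $p$. The main objective is to identify $M$ with the set of nilpotent elements of $R$ and then deduce that $M$ is both the unique maximal ideal and the Jacobson radical of $R$.

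First I would read off from equation~(6) that for every $a=\sum_{g\in G}a_{g}X^{g}\in R$ one has $a^{p^{r}}=\phi(a)\cdot 1$, where $\phi$ is the augmentation map. This uses two ingredients: the Frobenius identity $c^{p^{r}}=c$ for every $c\in \mathbb{F}_{p^{r}}$, and $p^{r}g=0$ in $G$ because $GR(p^{r},m)$ has characteristic $p^{r}$. It follows immediately that $a^{p^{r}}=0$ if and only if $\phi(a)=0$, which gives the set equality
\begin{equation*}
M=\{a\in R : a^{p^{r}}=0\}.
\end{equation*}

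Next I would show that $M=\rad(R)$. Since $\phi : R\to F$ is a surjective algebra homomorphism with kernel $M$, the quotient $R/M\cong F$ is a field, so $M$ is a maximal ideal. On the other hand, every element of $M$ is nilpotent by the previous step, so $M$ is a nil ideal in the commutative finite-dimensional $F$-algebra $R$; hence $M\subseteq \rad(R)$. Since $\rad(R)$ is contained in every maximal ideal, in particular $\rad(R)\subseteq M$, and equality follows.

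Finally, the locality of $R$ is now immediate: any maximal ideal of $R$ must contain $\rad(R)=M$, and the maximality of $M$ forces every maximal ideal to equal $M$. Equivalently, any $a\notin M$ decomposes as $\phi(a)\cdot 1 + (a-\phi(a)\cdot 1)$, i.e.\ as a unit of $F$ plus a nilpotent element of $M$, and is therefore invertible, so the complement of $M$ consists entirely of units. I do not anticipate any real obstacle: the identity $a^{p^{r}}=\phi(a)\cdot 1$ is the crucial point that replaces the elementary abelian identity~(4) used in Proposition~3.1, and once it is in hand the remaining arguments are standard facts about finite-dimensional commutative algebras.
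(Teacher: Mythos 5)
Your proposal is correct and follows essentially the same route as the paper: both rest on relation (6), i.e.\ $a^{p^{r}}=\phi(a)\cdot 1$, to identify $M$ with the nilpotent elements and to conclude that everything outside $M$ is a unit. You merely supply more detail than the paper's terse argument (explicitly invoking that a nil ideal lies in $\rad(R)$ and that $\rad(R)$ lies in the maximal ideal $M$), and your unit argument via ``unit plus nilpotent'' is a cosmetic variant of the paper's implicit inverse $\phi(a)^{-1}a^{p^{r}-1}$.
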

\begin{proof}
The relation (6) shows that $(\sum_{g\in G}a_{g}X^{g})^{p^r}$ is zero if and only if\\$\sum_{g\in G}a_{g}=0$. Thus, $M$ is the radical of $R$.\\
Moreover, it shows also that every element of $R$ that is not in $M$ is a unit of $R$ and it follows that $M$ is the unique maximal ideal of $R$.
\end{proof}
\begin{rem}\label{canbasis}
The set $\left\{X^{g}\mid g\in G\right\}$ is a linear basis of $R$ considered as a vector space over $F$.
\end{rem}
\begin{prop}
The ideal $M$ is generated linearly over the field $F$ by the elements $X^{g}-1$ where $g\in G-\left\{0\right\}$.
\end{prop}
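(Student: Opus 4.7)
The plan is to prove both containments (and in fact linear independence, giving an actual basis of $M$) by an elementary rewriting argument, since the key fact is simply that the augmentation map sends $X^g$ to $1$ and hence $X^g-1$ to $0$.

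First I would observe that each $X^g-1$ lies in $M$: its coefficients are $1$ at $g$ and $-1$ at $0$ (when $g\neq 0$), and these sum to $0$, so $\phi(X^g-1)=0$ in the augmentation map of section 3 (carried over verbatim here). Hence the $F$-span of $\{X^g-1 : g\in G\setminus\{0\}\}$ is contained in $M$.

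For the reverse inclusion, let $a=\sum_{g\in G}a_g X^g \in M$, so $\sum_{g\in G}a_g=0$, which forces $a_0=-\sum_{g\neq 0}a_g$. Then I would rewrite
\begin{equation*}
a \;=\; a_0\cdot 1 + \sum_{g\neq 0} a_g X^g \;=\; -\sum_{g\neq 0}a_g + \sum_{g\neq 0}a_g X^g \;=\; \sum_{g\neq 0} a_g(X^g-1),
\end{equation*}
which exhibits $a$ as an $F$-linear combination of the claimed generators.

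The proof is essentially a one-line manipulation, so there is no real obstacle; the only thing worth flagging is that the same argument yields linear independence (and therefore that $\{X^g-1 : g\in G\setminus\{0\}\}$ is in fact an $F$-basis of $M$), since a nontrivial linear dependence $\sum_{g\neq 0}c_g(X^g-1)=0$ would read as $-(\sum_{g\neq 0}c_g)\cdot 1+\sum_{g\neq 0}c_g X^g=0$, contradicting the linear independence of $\{X^g : g\in G\}$ recorded in Remark~\ref{canbasis}. This remark also makes clear why nothing changes in passing from the elementary abelian setting of section 3 to the present setting where $G$ is the additive group of $GR(p^r,m)$: only the $F$-basis $\{X^g\}$ and the definition of $M$ via the coefficient sum are used.
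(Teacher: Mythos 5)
Your proof is correct and follows essentially the same one-line rewriting as the paper: both exhibit $a=\sum_{g}a_{g}X^{g}$ with $\sum_{g}a_{g}=0$ as $\sum_{g\neq 0}a_{g}(X^{g}-1)$ by absorbing the constant term. The added remark on linear independence is a harmless bonus not present in the paper's proof.
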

\begin{proof}
Each element $X^{g}-1$ is in $M$. Conversely, let $\sum_{g\in G}a_{g}X^{g}\in M$,  i.e $\sum_{g\in G}a_{g}=0$. We have 
\begin{equation*}
\begin{aligned}
\sum_{g\in G}a_{g}X^{g}&=\sum_{g\in G}a_{g}\left[(X^{g}-1)+1\right]\\&=\sum_{g\in G}\left[a_{g}(X^{g}-1)+a_{g}1\right]=\sum_{g\in G}a_{g}(X^{g}-1)+\sum_{g\in G}a_{g}1\\&=\sum_{g\in G}a_{g}(X^{g}-1)+(\sum_{g\in G}a_{g})1=\sum_{g\in G}a_{g}(X^{g}-1)\\&=\sum_{g\in G-\{0\}}a_{g}(X^{g}-1).
\end{aligned}
\end{equation*}
\end{proof}
\begin{lem}
Let $p$ be a prime number and $r\geq 1$ an integer. Then
\begin{enumerate}
\item $(-1)^{p^{r}-1}=1 \mbox{ mod } p$.
\item $\binom{p^{r}-1}{i}=(-1)^{i} \mbox{ mod } p$ for all $i=0,1,\ldots,p^{r}-1$.
\end{enumerate}
\end{lem}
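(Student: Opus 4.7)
My plan is to dispatch part (1) by a short case split and then use part (1) as an input to part (2) via a Frobenius-type identity in the polynomial ring $\mathbb{F}_p[X]$.

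For part (1), I would separate the cases $p=2$ and $p$ odd. If $p=2$, then $-1\equiv 1 \pmod p$, so $(-1)^{p^r-1}\equiv 1\pmod p$ trivially. If $p$ is odd, then $p^r$ is odd, so $p^r-1$ is even and $(-1)^{p^r-1}=1$ holds already in $\mathbb{Z}$. Either way we land on $1$ modulo $p$.

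For part (2), the key idea is to compute $(1+X)^{p^r-1}$ in $\mathbb{F}_p[X]$ in two different ways. On the one hand, expanding by the binomial theorem gives
\[
(1+X)^{p^r-1} \;=\; \sum_{i=0}^{p^r-1}\binom{p^r-1}{i}X^i,
\]
so identifying the coefficient of $X^i$ modulo $p$ is what we want. On the other hand, multiplying by $(1+X)$ and using the Frobenius identity (valid in characteristic $p$) gives
\[
(1+X)\cdot(1+X)^{p^r-1} \;=\; (1+X)^{p^r} \;=\; 1+X^{p^r}\quad \text{in } \mathbb{F}_p[X].
\]
Separately, a direct telescoping computation shows
\[
(1+X)\sum_{i=0}^{p^r-1}(-1)^i X^i \;=\; 1+(-1)^{p^r-1}X^{p^r},
\]
which, thanks to part (1), is also equal to $1+X^{p^r}$ in $\mathbb{F}_p[X]$. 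Since $\mathbb{F}_p[X]$ is an integral domain and $1+X\neq 0$, I can cancel $(1+X)$ from the two representations of $1+X^{p^r}$ to conclude
\[
(1+X)^{p^r-1}\;=\;\sum_{i=0}^{p^r-1}(-1)^i X^i\quad \text{in } \mathbb{F}_p[X],
\]
and comparing coefficients of $X^i$ gives $\binom{p^r-1}{i}\equiv (-1)^i\pmod p$.

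There is no real obstacle here: part (1) is elementary, and the only subtlety in part (2) is making sure the sign $(-1)^{p^r-1}$ arising from the telescoping matches $+1$, which is precisely the content of part (1). An alternative route via Lucas' theorem applied to the base-$p$ expansion $p^r-1=(p-1)+(p-1)p+\cdots+(p-1)p^{r-1}$ would also work, but the Frobenius approach is shorter and self-contained within the paper's framework.
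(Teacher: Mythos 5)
Your proof is correct, but it takes a genuinely different route from the paper's. The paper disposes of part (1) as ``trivial'' and proves part (2) by induction on $i$ --- the natural induction being Pascal's rule $\binom{p^{r}-1}{i}=\binom{p^{r}}{i}-\binom{p^{r}-1}{i-1}$ together with the fact that $p$ divides $\binom{p^{r}}{i}$ for $0<i<p^{r}$, which gives $\binom{p^{r}-1}{i}\equiv-\binom{p^{r}-1}{i-1}\pmod p$ and closes the induction in one line. (Be aware that the more naive induction via $\binom{n}{i+1}(i+1)=\binom{n}{i}(n-i)$ runs into trouble when $p\mid i+1$, so Pascal's rule is the right vehicle.) Your argument instead proves the stronger polynomial identity $(1+X)^{p^{r}-1}=\sum_{i=0}^{p^{r}-1}(-1)^{i}X^{i}$ in $\mathbb{F}_{p}[X]$ by comparing two factorizations of $1+X^{p^{r}}$ and cancelling $1+X$ in the integral domain $\mathbb{F}_{p}[X]$; every step checks out, including the telescoping sum and the appeal to part (1) to identify the sign $(-1)^{p^{r}-1}$. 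What your approach buys is that it packages all $p^{r}$ congruences into a single identity and, notably, that identity is precisely the additive-group analogue of the paper's Proposition 4.5, namely $(X^{g}-1)^{p^{r}-1}=\sum_{i=0}^{p^{r}-1}X^{ig}$: in effect you prove Lemma 4.4(2) and Proposition 4.5 by the same mechanism, whereas the paper proves the coefficient congruences first and then expands the binomial. What the paper's induction buys is brevity and independence from the Frobenius endomorphism. Either proof is acceptable here.
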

\begin{proof}
The first assertion is trivial. The second can be proved by induction on $i$.
\end{proof}
\begin{prop}
Provided $g\neq 0$, we have
\begin{equation*}
(X^{g}-1)^{p^{r}-1}=\sum_{i=0}^{p^{r}-1}X^{ig}
\end{equation*}
\end{prop}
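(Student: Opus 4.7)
The plan is to mimic the proof strategy of Proposition 3.3 by applying the binomial theorem to $(X^g-1)^{p^r-1}$ in the commutative algebra $R$ and then simplifying the resulting binomial coefficients by means of Lemma 4.4. Concretely, I would first expand
$$(X^g-1)^{p^r-1}=\sum_{i=0}^{p^r-1}\binom{p^r-1}{i}(-1)^{p^r-1-i}X^{ig},$$
and then replace $\binom{p^r-1}{i}$ by $(-1)^i$ in $F$ using Lemma 4.4(2); this is legitimate because $F=\mathbb{F}_{p^r}$ has characteristic $p$, so the mod-$p$ congruence provided by the lemma becomes an equality in $F$. The coefficient of $X^{ig}$ then simplifies uniformly to $(-1)^i(-1)^{p^r-1-i}=(-1)^{p^r-1}$, which equals $1$ in $F$ by Lemma 4.4(1). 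Substituting back immediately yields $(X^g-1)^{p^r-1}=\sum_{i=0}^{p^r-1}X^{ig}$.

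No genuine obstacle arises, because the coefficient collapse is independent of $i$ and of the additive order of $g$. In particular, if $g=p^t u$ with $u$ a unit and $t\ge 1$, so that the additive order of $g$ is only $p^{r-t}<p^r$, the identity still holds as a formal equation in $R$: on the right the terms $X^{ig}$ repeat $p^t$ times each and the sum vanishes in characteristic $p$, while on the left $(X^g-1)^{p^{r-t}}=0$ already forces $(X^g-1)^{p^r-1}=0$ since $p^{r-t}\le p^r-1$. Thus the single binomial-expansion argument covers all nonzero $g\in G$, and the proof is a direct generalization of the $r=1$ case.
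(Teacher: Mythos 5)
Your proof is correct and follows essentially the same route as the paper: binomial expansion of $(X^{g}-1)^{p^{r}-1}$, then the substitution $\binom{p^{r}-1}{i}\equiv(-1)^{i}\pmod p$ and $(-1)^{p^{r}-1}\equiv 1\pmod p$ from Lemma 4.4 to collapse every coefficient to $1$. The extra paragraph about elements $g=p^{t}u$ of smaller additive order is a harmless sanity check that the paper omits, since the formal binomial computation already covers all $g\neq 0$.
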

\begin{proof}
\begin{equation*}
\begin{aligned}
(X^{g}-1)^{p^{r}-1}&=\sum_{i=0}^{p^{r}-1}\binom{p^{r}-1}{i}(-1)^{p^{r}-1-i}X^{ig}\\&=\sum_{i=0}^{p^{r}-1}\binom{p^{r}-1}{i}(-1)^{p^{r}-1}(-1)^{-i}X^{ig}=\sum_{i=0}^{p^{r}-1}X^{ig}
\end{aligned}
\end{equation*}
by Lemma 4.4.
\end{proof}
From (\ref{galring}), notice that
\begin{equation}\label{gring}
\begin{aligned}
GR(p^r,m)&=\mathbb{Z}_{p^r}[X] / (h(X))\\&=\left\{i_{0}+i_{1}\alpha +\ldots +i_{m-1}\alpha ^{m-1}\mid i_{l}\in \mathbb{Z}_{p^r},0\leq l\leq m-1\right\}
\end{aligned}
\end{equation}
where $\alpha:=X+(h(X))$.
\begin{prop}
We have
\begin{equation*}
(X^{1}-1)^{p^{r}-1}(X^{\alpha}-1)^{p^{r}-1}\ldots (X^{\alpha^{m-1}}-1)^{p^{r}-1}=\sum_{g\in G}X^{g}
\end{equation*}
\end{prop}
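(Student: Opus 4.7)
The plan is to expand each factor using Proposition 4.5 and then match the resulting sum against the unique representation of elements of $G$ given in (\ref{gring}).

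First, I would apply Proposition 4.5 to each of the $m$ factors. Since $\alpha^j \neq 0$ in $GR(p^r,m)$ for $0 \le j \le m-1$, the proposition gives
\begin{equation*}
(X^{\alpha^j}-1)^{p^r-1} = \sum_{i=0}^{p^r-1} X^{i\alpha^j}
\end{equation*}
for every $j = 0, 1, \ldots, m-1$ (with the convention $\alpha^0 = 1$).

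Next, I would multiply these $m$ sums together using the commutativity and distributivity of $R$, together with the rule $X^a X^b = X^{a+b}$. Expanding the product yields
\begin{equation*}
\prod_{j=0}^{m-1}\bigl(X^{\alpha^j}-1\bigr)^{p^r-1} = \sum_{(i_0,\ldots,i_{m-1})\in \mathbb{Z}_{p^r}^m} X^{i_0 + i_1\alpha + \cdots + i_{m-1}\alpha^{m-1}}.
\end{equation*}
This is a sum of $p^{rm}$ terms indexed by tuples $(i_0,\ldots,i_{m-1}) \in \mathbb{Z}_{p^r}^m$.

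Finally, I would invoke the description (\ref{gring}) of $GR(p^r,m)$: every element $g \in G$ can be written uniquely as $g = i_0 + i_1\alpha + \cdots + i_{m-1}\alpha^{m-1}$ with $i_l \in \mathbb{Z}_{p^r}$. Therefore the map $(i_0,\ldots,i_{m-1}) \mapsto i_0 + i_1\alpha + \cdots + i_{m-1}\alpha^{m-1}$ is a bijection between $\mathbb{Z}_{p^r}^m$ and $G$, so each $X^g$ appears exactly once in the expansion, giving $\sum_{g\in G} X^g$ as required.

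There is essentially no obstacle here: the argument is a routine combination of Proposition 4.5 and the fact that $\{1,\alpha,\ldots,\alpha^{m-1}\}$ is a free $\mathbb{Z}_{p^r}$-basis of $GR(p^r,m)$. The only point to be mindful of is that the hypothesis $g \neq 0$ in Proposition 4.5 is satisfied for each $\alpha^j$, which follows because $\alpha$ is a root of a basic irreducible polynomial and hence a unit in $GR(p^r,m)$.
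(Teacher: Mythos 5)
Your proof is correct and follows exactly the paper's argument: expand each factor via Proposition 4.5, distribute the product using $X^aX^b=X^{a+b}$, and identify the resulting index set with $G$ via the representation (\ref{gring}). The only addition is your explicit check that $\alpha^j\neq 0$, which the paper leaves implicit.
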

\begin{proof}
By Proposition 4.5, we have
\begin{equation*}
\begin{aligned}
&(X^{1}-1)^{p^{r}-1}(X^{\alpha}-1)^{p^{r}-1}\ldots (X^{\alpha^{m-1}}-1)^{p^{r}-1}\\&=(\sum_{i_{0}=0}^{p^{r}-1}X^{i_{0}1})(\sum_{i_{1}=0}^{p^{r}-1}X^{i_{1}\alpha})\ldots (\sum_{i_{m-1}=0}^{p^{r}-1}X^{i_{m-1}\alpha^{m-1}})\\&=\sum_{0\leq i_{l}\leq p^{r}-1,0\leq l\leq m-1}X^{i_{0}1}X^{i_{1}\alpha}\ldots X^{i_{m-1}\alpha^{m-1}}\\&=\sum_{0\leq i_{l}\leq p^{r}-1,0\leq l\leq m-1}X^{i_{0}1+i_{1}\alpha+\ldots +i_{m-1}\alpha^{m-1}}=\sum_{g\in G}X^{g}
\end{aligned}
\end{equation*}
\end{proof}
Set
\begin{equation*}
E:=\left\{0,1,\ldots ,p^{mr}-1\right\}
\end{equation*}
the set of integers between $0$ and $p^{mr}-1$. Let $i$ be an element of $E$ and
\begin{center}
$i=i_{0}+i_{1}p^{r}+i_{2}p^{2r}+\ldots +i_{m-1}p^{(m-1)r}$
\end{center}
where $0\leq i_{l}\leq p^{r}-1$ for all $l=0,1,\ldots ,m-1$ its $p^{r}$-adic expansion.\\
The $p^{r}$-weight of $i$ is defined by
\begin{equation}
\omega_{p^{r}}(i):=\sum_{l=0}^{m-1}i_{l}.
\end{equation}
Then, set
\begin{equation}\label{vectbasis}
V_{i}:=(X^{1}-1)^{i_{0}}(X^{\alpha}-1)^{i_{1}}\ldots (X^{\alpha^{m-1}}-1)^{i_{m-1}}
\end{equation}
and
\begin{equation}\label{radbasis}
B_{t}:=\left\{V_{i}\mid i\in E, \omega_{p^{r}}(i)\geq t\right\}
\end{equation}
for all $t$ such that $0\leq t\leq m(p^r-1)$.
\begin{prop}
$B_{0}$ is a linear basis of $R$ over $F$.
\end{prop}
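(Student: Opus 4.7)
The plan is to show that $B_0$ spans $R$, and then invoke a dimension count to conclude. Observe that
\[
|B_0| = \#\{(i_0,\ldots,i_{m-1}) : 0 \leq i_l \leq p^r-1\} = (p^r)^m = p^{mr} = |G|,
\]
and by Remark \ref{canbasis}, $\dim_F R = |G| = p^{mr}$. Hence it suffices to establish that $B_0$ is a spanning set, or equivalently (since $\{X^g : g \in G\}$ is already a basis) that every $X^g$ lies in the $F$-span of the $V_i$.

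The key input is the description of $G$ given in (\ref{gring}): every $g \in G$ has a unique expansion $g = i_0 \cdot 1 + i_1 \alpha + \cdots + i_{m-1}\alpha^{m-1}$ with $0 \leq i_l \leq p^r - 1$. Since $X$ is the group-homomorphism symbol, $X^{g+h} = X^g X^h$, so
\[
X^g = \prod_{l=0}^{m-1} X^{i_l \alpha^l} = \prod_{l=0}^{m-1} \bigl(X^{\alpha^l}\bigr)^{i_l}.
\]
For each $l$ I would write $X^{\alpha^l} = 1 + (X^{\alpha^l}-1)$ and apply the binomial theorem, obtaining
\[
\bigl(X^{\alpha^l}\bigr)^{i_l} = \sum_{j_l=0}^{i_l} \binom{i_l}{j_l} \bigl(X^{\alpha^l}-1\bigr)^{j_l}.
\]
Multiplying out and using the notation (\ref{vectbasis}) then yields
\[
X^g = \sum_{0 \leq j_l \leq i_l} \left(\prod_{l=0}^{m-1}\binom{i_l}{j_l}\right) V_{(j_0,\ldots,j_{m-1})},
\]
an $F$-linear combination of elements of $B_0$ (each exponent $j_l$ satisfies $0 \leq j_l \leq i_l \leq p^r-1$, so each $V_{(j_0,\ldots,j_{m-1})}$ genuinely belongs to $B_0$).

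There is essentially no obstacle here: the only subtlety is to notice that the exponent $i_l$ of the group element $\alpha^l$ and the exponent $i_l$ of the ring element $X^{\alpha^l}$ coincide, which is precisely the defining property of the group-algebra notation, and that the binomial coefficients $\binom{i_l}{j_l}$ make sense in $F$ (indeed in $\mathbb{Z}$). Combining spanning with the dimension count $|B_0| = \dim_F R$ completes the proof that $B_0$ is a linear basis of $R$ over $F$.
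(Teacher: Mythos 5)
Your proof is correct, but it takes a genuinely different route from the paper's. The paper keeps the same dimension count $\card(B_{0})=p^{rm}=\dim_{F}R$ but proves \emph{linear independence} rather than spanning: given a dependence relation $\sum_{i}\lambda_{i}V_{i}=0$, it multiplies by the complementary products $V_{p^{rm}-1-i}$ and uses $(X^{g}-1)^{p^{r}}=0$ together with $V_{p^{rm}-1}=\sum_{g\in G}X^{g}$ (Proposition 4.6) to isolate the coefficients one $p^{r}$-weight level at a time, starting from $\lambda_{0}$ and working upward. You instead prove \emph{spanning}, by expanding $X^{g}=\prod_{l}\bigl((X^{\alpha^{l}}-1)+1\bigr)^{i_{l}}$ via the binomial theorem, which exhibits the passage from the canonical basis $\{X^{g}\}$ to $\{V_{i}\}$ as a unitriangular (hence invertible) change of coordinates; the dimension count then upgrades spanning to a basis. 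Both arguments are sound. Yours is arguably the more elementary and standard proof that a Jennings-type family is a basis; the paper's has the side benefit of establishing the multiplication relations $V_{i}V_{j}=0$ (when some exponent reaches $p^{r}$) and $V_{i}V_{\bar{i}}=\sum_{g\in G}X^{g}$, which are reused almost verbatim in the proof of the annihilator formula $\ann(M^{t})=M^{1+m(p^{r}-1)-t}$ (Proposition 4.11). One small point worth making explicit in your write-up: since $B_{0}$ is a set, the equality $|B_{0}|=p^{mr}$ presupposes that the $V_{i}$ are pairwise distinct; your argument does not actually need this, because a spanning set of cardinality at most $\dim_{F}R$ must have cardinality exactly $\dim_{F}R$ and hence be a basis, but you should phrase the count as an upper bound to keep the logic airtight.
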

\begin{proof}
We have $\card(B_{0})=p^{rm}=\dim_{F}(R)$.\\
Let us prove that $B_{0}$ is a linearly independant set over $F$. Consider the linear combination $\sum_{\omega_{p^{r}}(i)\geq 0}\lambda_{i}V_{i}=0$, where $\lambda_{i}\in F$ for all $i$. We have $\lambda_{0}V_{0}+\sum_{\omega_{p^{r}}(i)\geq 1}\lambda_{i}V_{i}=0$ where $V_{0}=1$ is the identity element of $R$. Multiplying by $V_{p^{rm}-1}$, we get $\lambda_{0}V_{p^{rm}-1}+\sum_{\omega_{p^{r}}(i)\geq 1}\lambda_{i}V_{i}V_{p^{rm}-1}=0$.\\
Let $i=\sum_{l=0}^{m-1}i_{l}p^{lr}$ such that $\omega_{p^{r}}(i)\geq 1$. Then, there is an integer $l$ ($0\leq l \leq m-1$) such that $i_{l}\geq 1$. thus, $i_{l}+p^{r}-1\geq p^{r}$. Since $(X^{g}-1)^{p^{r}}=0$ for any $g$, we have $V_{i}V_{p^{rm}-1}=(X^{1}-1)^{i_{0}+p^{r}-1}\ldots (X^{\alpha^{l}}-1)^{i_{l}+p^{r}-1}\ldots (X^{\alpha^{m-1}}-1)^{i_{m-1}+p^{r}-1}=0$. Then, $0=\lambda_{0}V_{p^{rm}-1}=\lambda_{0}\sum_{g\in G}X^{g}$ by Proposition 4.6. Hence, $\lambda_{0}=0$.\\
Thus, we have $\sum_{\omega_{p^{r}}(i)\geq 1}\lambda_{i}V_{i}=0$.\\
Let $i=\sum_{l=0}^{m-1}i_{l}p^{lr}$ such that $\omega_{p^{r}}(i)=1$.\\
We have $\lambda_{i}V_{i}+\sum_{\omega_{p^{r}}(j)\geq 1,j\neq i}\lambda_{j}V_{j}=0$. Multiplying by $V_{p^{rm}-1-i}$, we have $\lambda_{i}V_{i}V_{p^{rm}-1-i}+\sum_{\omega_{p^{r}}(j)\geq 1,j\neq i}\lambda_{j}V_{j}V_{p^{rm}-1-i}=0$.\\
Let $j=\sum_{l=0}^{m-1}j_{l}p^{lr}\neq i$ such that $\omega_{p^{r}}(j)\geq 1=\omega_{p^{r}}(i)$. Then, there is an integer $l$ ($0\leq l\leq m-1$) such that $j_{l}>i_{l}$, and hence $j_{l}\geq i_{l}+1$ and $j_{l}+p^{r}-1-i_{l}\geq (i_{l}+1)+p^{r}-1-i_{l}=p^{r}$. Since $(X^{g}-1)^{p^{r}}=0$ for any $g$, we have $V_{j}V_{p^{rm}-1-i}=(X^{1}-1)^{j_{0}+p^{r}-1-i_{0}}\ldots (X^{\alpha^{l}}-1)^{j_{l}+p^{r}-1-i_{l}}\ldots (X^{\alpha^{m-1}}-1)^{j_{m-1}+p^{r}-1-i_{m-1}}=0$.\\
Consequently, $0=\lambda_{i}V_{i}V_{p^{rm}-1-i}=\lambda_{i}\sum_{g\in G}X^{g}$.\\
Hence $\lambda_{i}=0$.\\
Continuing in this way, we have\\
$\lambda_{i}=0$ for all $i$ such that $\omega_{p^{r}}(i)=1$, and\\
$\lambda_{i}=0$ for all $i$ such that $\omega_{p^{r}}(i)=2,3,\ldots ,m(p^{r}-1)$.
\end{proof}
\begin{cor}
$B_{1}$ is a linear basis of $M$ over $F$.
\end{cor}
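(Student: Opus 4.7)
The plan is to leverage Proposition 4.7, which shows $B_0$ is a basis of $R$, together with the fact that $M$ is the augmentation ideal of codimension $1$ in $R$.

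First I would observe that $B_1 \subset M$. Indeed, every $V_i$ with $\omega_{p^r}(i) \geq 1$ contains at least one factor of the form $X^{\alpha^l}-1$, which lies in $M$ by Proposition 4.3 (or directly, since $\phi(X^{\alpha^l}-1) = 0$); because $M$ is an ideal, the whole product $V_i$ belongs to $M$. Linear independence of $B_1$ is immediate since $B_1 \subset B_0$ and Proposition 4.7 says $B_0$ is linearly independent.

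It then remains to show $B_1$ spans $M$. For this I would take any $a \in M$ and, using that $B_0 = \{V_0\} \cup B_1$ is a basis of $R$, write
\begin{equation*}
a = \lambda_0 V_0 + \sum_{\omega_{p^r}(i) \geq 1} \lambda_i V_i, \quad \lambda_i \in F.
\end{equation*}
Applying the augmentation map $\phi$ and using $\phi(V_0) = \phi(1) = 1$ together with $\phi(V_i) = 0$ for $\omega_{p^r}(i) \geq 1$ (since each $V_i$ is a product containing factors $X^{\alpha^l}-1$ of augmentation zero, and $\phi$ is a homomorphism), I obtain $0 = \phi(a) = \lambda_0$. Hence $a = \sum_{\omega_{p^r}(i) \geq 1} \lambda_i V_i \in \operatorname{span}_F B_1$.

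There is no real obstacle here: the corollary is an immediate consequence of Proposition 4.7 and the multiplicativity of $\phi$. As a sanity check one may note the dimension count: $|B_1| = p^{rm} - 1 = \dim_F R - 1 = \dim_F M$, which is consistent with $B_1$ being a basis of $M$.
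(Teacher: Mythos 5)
Your proof is correct and follows essentially the same route as the paper: linear independence from $B_{1}\subset B_{0}$ (Proposition 4.7) and the inclusion $B_{1}\subseteq M$. The only difference is cosmetic: the paper closes by the dimension count $\card(B_{1})=p^{rm}-1=\dim_{F}(M)$, while you additionally spell out the spanning step via the augmentation map $\phi$ killing the coefficient $\lambda_{0}$ of $V_{0}$ --- a valid alternative that you then corroborate with the same dimension count.
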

\begin{proof}
Since $B_{1}\subset B_{0}$, then $B_{1}$ is a linearly independant set of $R$ over $F$. And since $V_{i}\in M$ for all $i$ such that $\omega_{p^{r}}(i)\geq 1$, $B_{1}\subseteq M$. Moreover, we have $\dim_{F}(M)=p^{rm}-1=\card(B_{1})$.
\end{proof}
\begin{cor}\label{linbasis}
$B_{t}$ is a linear basis of $M^{t}$ over $F$ where $1\leq t\leq m(p^{r}-1)$.
\end{cor}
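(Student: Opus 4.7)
The plan is to proceed by induction on $t$, with the base case $t=1$ supplied by Corollary 4.8. The algebraic fact that drives everything is a multiplicative compatibility of the family $\{V_i\}$: a direct computation gives
\[
V_i V_j = \prod_{l=0}^{m-1}(X^{\alpha^l}-1)^{i_l+j_l},
\]
which vanishes whenever some $i_l+j_l \geq p^r$ (because $(X^{\alpha^l}-1)^{p^r} = X^{p^r\alpha^l} - 1 = 0$ in characteristic $p$), and otherwise equals $V_k$ with digits $k_l = i_l + j_l$ and weight $\omega_{p^r}(k) = \omega_{p^r}(i) + \omega_{p^r}(j)$. Thus products of $V$'s are either zero or have weight equal to the sum of the factor weights.

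Given this, two of the three things to check are essentially free. Since $B_t \subseteq B_0$ and $B_0$ is linearly independent by Proposition 4.7, so is $B_t$. And the $F$-span of $B_t$ sits inside $M^t$, because each $V_i \in B_t$ is a product of $\omega_{p^r}(i) \geq t$ factors $(X^{\alpha^l}-1) \in M$ and so lies in $M^t$. The real work is the reverse inclusion $M^t \subseteq \mathrm{span}_F(B_t)$.

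For this reverse inclusion I would argue inductively from $M^{t+1} = M \cdot M^t$. Assuming $B_t$ spans $M^t$, by Proposition 4.3 it suffices to show that $(X^g - 1) V_j$ lies in the $F$-span of $B_{t+1}$ for every $g \in G - \{0\}$ and every $V_j \in B_t$. To express $X^g - 1$ in the basis $B_1$, I would invoke the $p^r$-adic decomposition $g = g_0 + g_1\alpha + \cdots + g_{m-1}\alpha^{m-1}$ (with $g_l \in \{0,1,\ldots,p^r-1\}$) to write $X^g = \prod_l (1 + (X^{\alpha^l}-1))^{g_l}$, expand each factor by the binomial theorem, and discard every monomial in which some $(X^{\alpha^l}-1)$-exponent reaches $p^r$. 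This writes $X^g - 1$ as an $F$-linear combination of elements of $B_1$. Multiplying by $V_j$ and applying the weight-additivity observation then yields a combination of elements of $B_{t+1}$ (together with zeros), closing the induction.

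I do not anticipate a serious obstacle. The step requiring the most care is the binomial expansion of $X^g - 1$ in the basis $B_1$ for general $g$, which is a routine multinomial calculation once one uses that higher powers vanish; apart from that, everything reduces to the single weight-additivity observation above.
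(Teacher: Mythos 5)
Your proof is correct and takes essentially the same route as the paper's: both write elements of $M^{t}$ as products of elements of $M$ expanded in the basis $B_{1}$ and conclude via the key observation that $V_{i}V_{j}$ is either zero or equals $V_{k}$ with $\omega_{p^{r}}(k)=\omega_{p^{r}}(i)+\omega_{p^{r}}(j)$. The only cosmetic differences are that you package the spanning argument as an induction on $t$ via $M^{t+1}=M\cdot M^{t}$ (the paper expands a $t$-fold product in one step) and that you re-derive the $B_{1}$-expansion of $X^{g}-1$ by a binomial computation, which Corollary 4.8 already supplies.
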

\begin{proof}
Since $B_{t}\subset B_{0}$, $B_{t}$ is a linearly independant set of $R$ over $F$. $M^{t}$ is generated as ideal by the set $\left\{a_{1}a_{2}\ldots a_{t}\mid a_{i}\in M, 1\leq i\leq t \right\}$. Consider a product $a_{1}a_{2}\ldots a_{t}$ where $a_{i}\in M$ for all $i$ ($1\leq i\leq t$). Since $M$ is linearly generated by $B_{1}$ over $F$, then we have $a_{i}=\sum_{j=1}^{p^{rm}-1}\lambda_{i_{j}}V_{j}$, where $\lambda_{i_{j}}\in F$ for all $i$ and $j$. Therefore, $a_{1}a_{2}\ldots a_{t}=\prod_{i=1}^{t}(\sum_{j=1}^{p^{rm}-1}\lambda_{i_{j}}V_{j})=\sum_{\omega_{p^{r}}(j)\geq t}\beta_{j}V_{j}\in B_{t}$ where $\beta_{j}\in F$. And it is clear that a finite sum of such products is a linear combination of the elements of $B_{t}$ with coefficients in $F$. Thus, $M^{t}$ is generated by $B_{t}$ over $F$.
\end{proof}
\begin{rem}
The index of nilpotency of the radical $M$ is $1+m(p^{r}-1)$. And we have the following sequence of ideals:
\begin{center}
$\left\{0\right\}=M^{1+m(p^{r}-1)}\subset M^{m(p^{r}-1)}\subset\ldots\subset M^{2}\subset M\subset R$.
\end{center}
\end{rem}
\begin{prop}
$\ann(M^{t})= M^{1+m(p^{r}-1)-t}$ where $1\leq t\leq m(p^{r}-1)$.
\end{prop}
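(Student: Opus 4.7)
The plan is to prove the equality by establishing each inclusion separately. For the inclusion $M^{1+m(p^r-1)-t} \subseteq \ann(M^t)$, it suffices to note that by Remark 4.10 the radical $M$ has nilpotency index $1+m(p^r-1)$, so any product of an element of $M^t$ with an element of $M^{1+m(p^r-1)-t}$ lies in $M^{1+m(p^r-1)} = \{0\}$.

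For the reverse inclusion, I would first derive a product rule for the basis vectors $V_i$. Since $X^{\alpha^l}-1 \in M$, Proposition 4.1 gives $(X^{\alpha^l}-1)^{p^r}=0$ for each $l$, so for indices $j,k \in E$ with $p^r$-adic digits $(j_l)_l$ and $(k_l)_l$ one has
\[
V_j V_k = \begin{cases} V_{j+k} & \text{if } j_l+k_l \leq p^r-1 \text{ for all } l,\\ 0 & \text{otherwise,}\end{cases}
\]
where $j+k$ denotes componentwise addition of digits. This formula is the technical backbone of the argument.

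Now take any $a \in \ann(M^t)$ and expand $a = \sum_{k\in E}\lambda_k V_k$ using Proposition 4.7. For each $i \in E$ with $\omega_{p^r}(i) \leq m(p^r-1)-t$, introduce the complementary index $i^*$ whose digits are $p^r-1-i_l$. Its $p^r$-weight is $m(p^r-1)-\omega_{p^r}(i) \geq t$, so $V_{i^*} \in M^t$ by Corollary~\ref{linbasis}, hence $aV_{i^*}=0$. The product rule collapses this to $\sum_{k\leq i}\lambda_k V_{k+i^*}=0$, where $k\leq i$ means digitwise. The map $k \mapsto k+i^*$ is injective on $\{k : k\leq i\}$, so the vectors $V_{k+i^*}$ are pairwise distinct basis elements; linear independence (Proposition 4.7) then forces $\lambda_k=0$ for every such $k$, and in particular $\lambda_i=0$. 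Letting $i$ range over all indices of weight at most $m(p^r-1)-t$ shows that $a$ lies in $M^{1+m(p^r-1)-t}$ by Corollary~\ref{linbasis}.

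The one point requiring genuine care is the multiplication rule for the $V_i$ together with the observation that the digitwise complement $i^*$ is precisely the index for which $V_i V_{i^*}$ survives and equals the top basis element $V_{p^{rm}-1}=\sum_{g\in G}X^g$; this is what allows the annihilation relation to isolate $\lambda_i$ cleanly. Everything else reduces to bookkeeping with $p^r$-adic digits.
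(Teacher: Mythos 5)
Your proposal is correct and follows the same strategy as the paper: establish $M^{1+m(p^{r}-1)-t}\subseteq\ann(M^{t})$ from the nilpotency index of $M$, and for the converse expand $a$ in the Jennings-type basis, multiply by the complementary basis element $V_{i^{*}}$, and use the vanishing/non-vanishing pattern of the products $V_{j}V_{k}$ to force $\lambda_{i}=0$ for every $i$ with $\omega_{p^{r}}(i)\leq m(p^{r}-1)-t$. The only difference is one of packaging: the paper extracts the coefficients by a descending induction on $\omega_{p^{r}}(i)$, so that at each stage $aV_{i^{*}}$ collapses to the single surviving term $\lambda_{i}\sum_{g\in G}X^{g}$ because lower-weight coefficients are already known to vanish, whereas you obtain all $\lambda_{k}$ with $k\leq i$ digitwise in one stroke from the linear independence of the distinct basis vectors $V_{k+i^{*}}$ --- a slightly cleaner way of running the same computation, which also makes explicit the product rule for the $V_{i}$ that the paper uses implicitly.
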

\begin{proof}
Let $t$ be an integer such that $1\leq t\leq m(p^{r}-1)$. By Corollary \ref{linbasis}, (\ref{radbasis}), (\ref{vectbasis}) and (7), $B_{t}:=\left\{V_{i}\mid i\in E, \omega_{p^{r}}(i)\geq t\right\}$ is a linear basis of $M^{t}$ over $F$ where $V_{i}:=(X^{1}-1)^{i_{0}}(X^{\alpha}-1)^{i_{1}}\ldots (X^{\alpha^{m-1}}-1)^{i_{m-1}}$ and $i=i_{0}+i_{1}p^{r}+\ldots +i_{m-1}p^{(m-1)r}$ ($0\leq i_{l}\leq p^{r}-1,\forall l$).\\
By definition, $\ann(M^{t}):=\left\{a\in R\mid as=0\;\; \text{for all}\;\; s\in M^{t}\right\}$.\\Since the index of nilpotency of $M$ is $1+m(p^{r}-1)$, then we have $\ann(M^{t})\supseteq M^{1+m(p^{r}-1)-t}$.\\
Let $a\in \ann(M^{t})$. Since $a\in R$, we have $a=\sum_{\omega_{p^{r}}(i)\geq 0}\lambda_{i}V_{i}$ where $\lambda_{i}\in F$ for all $i$.\\
Let $i$ such that $\omega_{p^{r}}(i)=m(p^{r}-1)$. Thus, $i=p^{rm}-1$.\\
Since $V_{p^{rm}-1}\in M^{m(p^{r}-1)}\subseteq M^{t}$, then $aV_{p^{rm}-1}=\lambda_{0}V_{p^{rm}-1}=\lambda_{0}\sum_{g\in G}X^{g}=0$. Hence, $\lambda_{0}=0$.\\
Let $i\in E$ such that $\omega_{p^{r}}(i)=m(p^{r}-1)-1$.\\
Since $V_{i}\in M^{m(p^{r}-1)-1}\subseteq M^{t}$, then $aV_{i}=\lambda_{\bar{i}}\sum_{g\in G}X^{g}=0$ where $\bar{i}=(p^{r}-1-i_{0})+(p^{r}-1-i_{1})p^{r}+\ldots +(p^{r}-1-i_{m-1})p^{(m-1)r}$. Hence $\lambda_{\bar{i}}=0$. We have $\omega_{p^{r}}(\bar{i})=m(p^{r}-1)-\omega_{p^{r}}(i)=1$. Therefore, $\lambda_{\bar{i}}=0$ for all $i$ such that $\omega_{p^{r}}(i)=m(p^{r}-1)-1$.\\
Let $i$ such that $\omega_{p^{r}}(i)=m(p^{r}-1)-2$.\\
Since $V_{i}\in M^{m(p^{r}-1)-2}\subseteq M^{t}$, then $aV_{i}=\lambda_{\bar{i}}\sum_{g\in G}X^{g}=0$. Hence $\lambda_{\bar{i}}=0$. It follows that $\lambda_{\bar{i}}=0$ for all $i$ such that $\omega_{p^{r}}(i)=m(p^{r}-1)-2$, i.e $\omega_{p^{r}}(\bar{i})=2$.
Continuing in this way, finally let $i$ such that $\omega_{p^{r}}(i)=t$.\\
Since $V_{i}\in M^{t}$, then $aV_{i}=\lambda_{\bar{i}}\sum_{g\in G}X^{g}=0$. Hence $\lambda_{\bar{i}}=0$ with $\omega_{p^{r}}(\bar{i})=m(p^{r}-1)-t$.\\
It follows that $\lambda_{\bar{i}}=0$ for all $i$ such that $\omega_{p^{r}}(i)=t$, i.e $\omega_{p^{r}}(\bar{i})=m(p^{r}-1)-t$.\\
Then $a\in M^{m(p^{r}-1)-t+1}$.
\end{proof}

\section{Isomorphism between $\mathbb{F}_{p^r}[GR(p^{r},m)]$ and\\ $\mathbb{F}_{p^r}[X_{0},X_{1},\ldots ,X_{m-1}]/(X_{0}^{p^r}-1,\ldots ,X_{m-1}^{p^r}-1)$}
Recall that $\mathbb{F}_{p^r}[G]=\left\{\sum_{g\in G}a_{g}X^{g}\mid a_{g}\in \mathbb{F}_{p^r}\right\}$ where $G$ is the additive group of the Galois ring $GR(p^r,m)$.
\begin{prop}
The group algebra $\mathbb{F}_{p^r}[G]$ is isomorphic to the quotient ring $\mathbb{F}_{p^{r}}[X_{0},X_{1},\ldots ,X_{m-1}]/(X_{0}^{p^{r}}-1,\ldots ,X_{m-1}^{p^{r}}-1)$.
\end{prop}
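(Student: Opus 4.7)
The plan is to write down the natural map from the polynomial ring to the group algebra, check that it kills the listed relations, and then use a dimension count to deduce that the induced map on the quotient is an isomorphism.

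First, recall from (\ref{gring}) that every element $g\in G$ has a unique expression $g=i_{0}\cdot 1+i_{1}\alpha+\cdots+i_{m-1}\alpha^{m-1}$ with $i_{l}\in\mathbb{Z}_{p^{r}}$. In particular, as an additive group, $G\cong(\mathbb{Z}_{p^{r}})^{m}$, so every element of $G$ has additive order dividing $p^{r}$, i.e.\ $p^{r}\alpha^{l}=0$ in $G$ for each $l$. I would define the algebra homomorphism
\begin{equation*}
\Phi:\mathbb{F}_{p^{r}}[X_{0},X_{1},\ldots,X_{m-1}]\longrightarrow\mathbb{F}_{p^{r}}[G],\qquad X_{l}\longmapsto X^{\alpha^{l}},
\end{equation*}
using the universal property of the polynomial ring.

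Next I would check that each generator $X_{l}^{p^{r}}-1$ of the defining ideal lies in $\ker\Phi$: indeed, $\Phi(X_{l}^{p^{r}})=(X^{\alpha^{l}})^{p^{r}}=X^{p^{r}\alpha^{l}}=X^{0}=1$, so $\Phi(X_{l}^{p^{r}}-1)=0$. Hence $\Phi$ factors through a well-defined algebra map
\begin{equation*}
\bar{\Phi}:\mathbb{F}_{p^{r}}[X_{0},\ldots,X_{m-1}]/(X_{0}^{p^{r}}-1,\ldots,X_{m-1}^{p^{r}}-1)\longrightarrow\mathbb{F}_{p^{r}}[G].
\end{equation*}

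For surjectivity I would use the decomposition $g=i_{0}+i_{1}\alpha+\cdots+i_{m-1}\alpha^{m-1}$ to write
\begin{equation*}
X^{g}=X^{i_{0}\cdot 1}X^{i_{1}\alpha}\cdots X^{i_{m-1}\alpha^{m-1}}=\bar{\Phi}(X_{0}^{i_{0}}X_{1}^{i_{1}}\cdots X_{m-1}^{i_{m-1}}),
\end{equation*}
so every basis vector of $\mathbb{F}_{p^{r}}[G]$ from Remark~\ref{canbasis} lies in the image. Finally, a straightforward dimension count shows that the quotient ring has the monomials $X_{0}^{e_{0}}\cdots X_{m-1}^{e_{m-1}}$ with $0\le e_{l}<p^{r}$ as an $\mathbb{F}_{p^{r}}$-basis, giving dimension $p^{rm}$, which equals $|G|=\dim_{\mathbb{F}_{p^{r}}}\mathbb{F}_{p^{r}}[G]$. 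A surjective linear map between finite-dimensional vector spaces of equal dimension is an isomorphism, so $\bar{\Phi}$ is the required algebra isomorphism.

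There is no real obstacle here; the only thing to be careful about is the bookkeeping between the two presentations of $G$, namely writing $g\in G$ in the $\mathbb{Z}_{p^{r}}$-basis $\{1,\alpha,\ldots,\alpha^{m-1}\}$ so that the multiplicative rule $X^{g+h}=X^{g}X^{h}$ in $\mathbb{F}_{p^{r}}[G]$ translates cleanly into the product of the monomials $X_{l}^{i_{l}}$ on the polynomial side. Once that correspondence is set up, the killing of the relations $X_{l}^{p^{r}}=1$ is forced by the exponent-$p^{r}$ property of $G$, and the dimension count closes the argument.
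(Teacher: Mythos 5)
Your proposal is correct, and it uses the same homomorphism as the paper ($X_{l}\mapsto X^{\alpha^{l}}$) with the same verification that the relations $X_{l}^{p^{r}}-1$ die; the only divergence is in how the quotient map is shown to be injective. The paper computes the kernel exactly: it divides an arbitrary element of $\ker\psi$ by $\{X_{0}^{p^{r}}-1,\ldots,X_{m-1}^{p^{r}}-1\}$ using the multivariate division algorithm, observes that the remainder is supported on monomials with all exponents below $p^{r}$, and invokes the linear independence of $\{X^{g}\mid g\in G\}$ (Remark 4.2) to force the remainder to vanish, so that $\ker\psi$ equals the ideal and the first isomorphism theorem finishes. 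You instead prove surjectivity (which the paper leaves implicit) and close with a dimension count. The two routes are essentially equivalent in content: your claim that the reduced monomials $X_{0}^{e_{0}}\cdots X_{m-1}^{e_{m-1}}$, $0\le e_{l}<p^{r}$, span the quotient is exactly the division-algorithm step, and strictly speaking you only need spanning (giving $\dim\le p^{rm}$) combined with surjectivity onto a $p^{rm}$-dimensional space (giving $\dim\ge p^{rm}$), so you need not assert linear independence of those monomials in the quotient a priori, which would otherwise want a word of justification (e.g.\ via the tensor decomposition into factors $\mathbb{F}_{p^{r}}[X_{l}]/(X_{l}^{p^{r}}-1)$). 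Your version is slightly slicker; the paper's version has the small advantage of identifying the kernel explicitly without appealing to finite-dimensionality.
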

\begin{proof}
Consider the mapping
\begin{equation}
\begin{aligned}
\psi :\quad  \mathbb{F}_{p^{r}}[X_{0},X_{1},\ldots ,X_{m-1}]&\longrightarrow R \\
        X_{i} &\longmapsto X^{\alpha^{i}}
\end{aligned}
\end{equation}
where $\alpha$ is defined in (7).\\
It is clear that the ideal $\left\langle X_{i}^{p^{r}}-1\mid 0\leq i\leq m-1\right\rangle \subseteq \ker\psi$.\\
Conversely, let $a(X_{0},\ldots ,X_{m-1})\in \ker\psi$. By using the division algorithm in $\mathbb{F}_{p^{r}}[X_{0},X_{1},\ldots ,X_{m-1}]$ (see[6]) with the lexicographic order such that $X_{0}>X_{1}>\ldots >X_{m-1}$, let us divide $a(X_{0},\ldots ,X_{m-1})$ by\\ $\left\{X_{0}^{p^{r}}-1,\ldots ,X_{m-1}^{p^{r}}-1\right\}$. We have\\
$a(X_{0},\ldots ,X_{m-1})=b_{0}(X_{0}^{p^{r}}-1)+\ldots+b_{m-1}(X_{m-1}^{p^{r}}-1)+g(X_{0},\ldots ,X_{m-1})$ where $b_{i}\in \mathbb{F}_{p^{r}}[X_{0},X_{1},\ldots ,X_{m-1}]$ for all $i$ and\\
$g(X_{0},\ldots ,X_{m-1})=\sum_{0\leq i_{k}\leq p^{r}-1}g_{i_{0},\ldots,i_{m-1}}X_{0}^{i_{0}}\ldots X_{m-1}^{i_{m-1}}$\\
with $g_{i_{0},\ldots,i_{m-1}}\in \mathbb{F}_{p^{r}}$.\\
Then
\begin{equation*}
\begin{aligned}
\psi(a(X_{0},\ldots ,X_{m-1}))&=\psi(g(X_{0},\ldots ,X_{m-1}))\\&=\sum_{0\leq i_{k}\leq p^{r}-1}g_{i_{0},\ldots,i_{m-1}}X^{i_{0}+i_{1}\alpha+\ldots+i_{m-1}\alpha^{m-1}}=0.
\end{aligned}
\end{equation*}
By Remark \ref{canbasis}, $g_{i_{0},\ldots,i_{m-1}}=0$ for all $i_{k}$ such that $0\leq i_{k}\leq p^{r}-1$ ($0\leq k\leq m-1$). This implies that $g(X_{0},\ldots ,X_{m-1})=0$ and thus $a(X_{0},\ldots ,X_{m-1})\in \left\langle X_{i}^{p^{r}}-1\mid 0\leq i\leq m-1\right\rangle$.
\end{proof}

\section{Description of the GRM codes in $\mathbb{F}_{p^r}[GR(p^{r},m)]$}
Set $n:=q^{m}-1$ where $q=p^{r}$, $p$ is a prime number and $r,m$ are integers $\geq 1$. Let $A=\mathbb{F}_{q}[Z]/(Z^{n}-1)$.
We have\\$A=\left\{a_{0}+a_{1}z+\ldots +a_{n-1}z^{n-1}\mid a_{i}\in \mathbb{F}_{q}\right\}$ where $z=Z+(Z^{n}-1)$.\\
The shortened Generalized Reed-Muller code of length $n$ and of order $\nu$ ($0\leq \nu < m(q-1)$) over $\mathbb{F}_{q}$ denoted by $HC_{\nu}(m,q)$ is the ideal of $A$ generated by
\begin{equation}
f_{\nu}(z)=\prod_{0<\omega_{q}(i)\leq m(q-1)-\nu-1}(z-\gamma^{i})
\end{equation}
where $\gamma$ is a primitive element of the finite field $\mathbb{F}_{q^m}$ ([8],[11]), and $\omega_{q}(i)$ is defined in (8).\\
Recall that $G$ is the additive group of the Galois ring $GR(p^r,m)$.
Let us order the elements of $G=\left\{0,g_{0},g_{1},\ldots,g_{q^{m}-2}\right\}$ as follows:
\begin{center}
$0<g_{0}<g_{1}<\ldots<g_{q^{m}-2}$.
\end{center}
Thus, we have the following identification:
\begin{center}
$\sum_{g\in G}a_{g}X^{g}\in \mathbb{F}_{q}[G] \longleftrightarrow (a_{g})_{g\in G} \in (\mathbb{F}_{q})^{q^{m}}$.
\end{center}
Let us order the monomials $\left\{X^{g}\mid g\in G\right\}$ such that $X^{g_{i}}<X^{g_{j}}$ if and only if $g_{i}<g_{j}$.\\
For $a=\sum_{g\in G}a_{g}X^{g}\in \mathbb{F}_{q}[G]$, set $\md(a):=max\left\{g\in G\mid a_{g}\neq 0\right\}$.\\
We define the leading term of $a$ by $\LT(a):=a_{\md(a)}X^{\md(a)}$.\\
Consider the linear transformation
\begin{equation}
\begin{aligned}
\varphi :\quad  A&\longrightarrow \mathbb{F}_{q}[G] \\
        a_{0}+a_{1}z+\ldots +a_{n-1}z^{n-1} &\longmapsto (-\sum_{i=0}^{n-1}a_{i})X^{0}+\sum_{i=0}^{n-1}a_{i}X^{g_{i}}
\end{aligned}
\end{equation}
The Generalized Reed-Muller codes of length $q^{m}$ and of order $\nu$ ($0\leq \nu < m(q-1)$) over $\mathbb{F}_{q}$ is defined by
\begin{center}
$C_{\nu}(m,q)=\varphi(HC_{\nu}(m,q))$.
\end{center}
We have the following sequence:
\begin{equation}\label{seqgrm}
\left\{0\right\}\subset C_{0}(m,q)\subset C_{1}(m,q)\subset\cdots\subset C_{m(q-1)-1}(m,q)\subset \mathbb{F}_{q}[G]
\end{equation}
Since $\varphi$ is a one to one morphism, then we have 
\begin{equation}
\dim_{\mathbb{F}_{q}}(C_{\nu}(m,q))=\dim_{\mathbb{F}_{q}}(HC_{\nu}(m,q))
\end{equation}
with
\begin{equation}
\dim_{\mathbb{F}_{q}}(HC_{\nu}(m,q))=q^{m}-1-\deg(f_{\nu}(z)).
\end{equation}
Set
\begin{equation}\label{extpol}
\theta_{\nu}^{j}=\varphi(z^{j}f_{\nu}(z))
\end{equation}
where $0\leq \nu\leq m(q-1)-1$.\\
By (13), we have
\begin{equation}
\LT(\theta_{\nu}^{j})=X^{g_{\deg(z^{j}f_{\nu}(z))}}.
\end{equation}
Set $N(t):=\card(\left\{k\in\left[0,q^{m}-1\right]\mid \omega_{q}(k)=t\right\})$ where $\left[0,q^{m}-1\right]$ is the set of integers between $0$ and $q^{m}-1$.
\begin{thm}\label{propal}
We have
\begin{enumerate}
\item $C_{0}(m,q)$ is linearly generated over $\mathbb{F}_{q}$ by $\theta_{0}^{0}$ .
\item $C_{\nu}(m,q)$ is linearly generated over $\mathbb{F}_{q}$ by the set\\ $K_{\nu}:=\left\{\theta_{0}^{0}\right\}\cup\left\{\theta_{i}^{j}\mid 1\leq i\leq \nu \;,\; 0\leq j\leq N(m(q-1)-i)-1\right\}$ where $1\leq \nu\leq m(q-1)-1$.
\end{enumerate}
\end{thm}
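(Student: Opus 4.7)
The plan is to prove that $K_\nu$ is an $\mathbb{F}_q$-basis of $C_\nu(m,q)$ by checking three things: (a) $K_\nu \subseteq C_\nu(m,q)$, (b) $|K_\nu| = \dim_{\mathbb{F}_q} C_\nu(m,q)$, and (c) $K_\nu$ is linearly independent over $\mathbb{F}_q$. Spanning, and hence ``linearly generates'', follows immediately. Part (1) would then be the base case: since $HC_0(m,q)$ is a cyclic code of length $n = q^m - 1$ generated by $f_0(z)$, its dimension is $n - \deg(f_0) = 1$ because $\deg(f_0) = \sum_{t=1}^{m(q-1)-1} N(t) = q^m - 2$; injectivity of $\varphi$ then gives $C_0(m,q) = \mathbb{F}_q \theta_0^0$.

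For (a) in part (2), the definition (12) of $f_i$ makes the index set of $f_\nu$ a subset of that of $f_i$ whenever $1 \leq i \leq \nu$, so $f_\nu(z) \mid f_i(z)$ in $\mathbb{F}_q[z]$ and therefore $HC_i(m,q) \subseteq HC_\nu(m,q)$. Consequently each $\theta_i^j = \varphi(z^j f_i(z))$ lies in $C_\nu(m,q)$. For (b), combining (15) with $\deg(f_\nu) = \sum_{t=1}^{m(q-1)-\nu-1} N(t)$ and the identity $q^m - 2 = \sum_{t=1}^{m(q-1)-1} N(t)$, a telescoping gives
\begin{equation*}
\dim_{\mathbb{F}_q} C_\nu(m,q) = 1 + \sum_{t=m(q-1)-\nu}^{m(q-1)-1} N(t) = 1 + \sum_{i=1}^{\nu} N(m(q-1)-i) = |K_\nu|.
\end{equation*}

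The heart of the argument is (c). By (17), $\LT(\theta_i^j) = X^{g_{j + \deg(f_i)}}$. From (12), $f_{i-1}(z)/f_i(z) = \prod_{\omega_q(k) = m(q-1)-i}(z - \gamma^k)$, hence $\deg(f_{i-1}) - \deg(f_i) = N(m(q-1)-i)$. Thus, for each $i \in \{1,\ldots,\nu\}$, the integers $j + \deg(f_i)$ with $0 \leq j \leq N(m(q-1)-i)-1$ fill the interval $[\deg(f_i),\, \deg(f_{i-1})-1]$ exactly, and as $i$ runs over $\{\nu, \nu-1,\ldots,1\}$ these intervals, together with the singleton $\{\deg(f_0)\}$ coming from $\theta_0^0$, tile $[\deg(f_\nu),\, \deg(f_0)]$ without overlap. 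Distinct indices give distinct elements $g_k \in G$ and thus distinct leading monomials $X^{g_k}$, forcing $K_\nu$ to be linearly independent.

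The only delicate step is the interval-tiling bookkeeping; it rests entirely on the identity $\deg(f_{i-1}) - \deg(f_i) = N(m(q-1)-i)$, which is immediate from how the factors of $f_i$ are stripped off as $i$ grows. Once leading terms are observed to be pairwise distinct, the conclusion is forced by the dimension count in (b).
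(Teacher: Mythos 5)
Your proof is correct and follows essentially the same route as the paper's: membership of $K_\nu$ in $C_\nu(m,q)$ via the nested chain of codes, the cardinality count via the telescoping identity $N(m(q-1)-i)=\deg(f_{i-1})-\deg(f_i)$, and linear independence from the pairwise distinct leading terms $\LT(\theta_i^j)=X^{g_{j+\deg(f_i)}}$. Your explicit interval-tiling of the exponents $[\deg(f_i),\deg(f_{i-1})-1]$ is just a more detailed rendering of the paper's ordering of leading terms, and your divisibility argument $f_\nu\mid f_i$ makes explicit the inclusion the paper cites as (14).
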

\begin{proof}
Let $\nu$ be an integer such that $0\leq \nu\leq m(q-1)-1$.\\ We have $\theta_{0}^{0}=\varphi(f_{0}(z))=\varphi(\prod_{i=1}^{q^{m}-2}(z-\gamma^{i}))=\varphi(1+z+z^{2}+\ldots +z^{q^{m}-2})=\sum_{g\in G}X^{g}$.\\
Since $\dim_{\mathbb{F}_{q}}(C_{0}(m,q))=\dim_{\mathbb{F}_{q}}(HC_{0}(m,q))=q^{m}-1-(q^{m}-2)=1$, then the first assertion is proved.\\
For the integer $\nu$ such that $1\leq \nu \leq m(q-1)-1$, $f_{\nu}(z)$ is a monic polynomial and by (12), $\deg(f_{\nu}(z))=\sum_{t=1}^{m(q-1)-\nu-1}N(t)$.\\
Thus, $\deg(z^{j}f_{\nu}(z))=j+\sum_{t=1}^{m(q-1)-\nu-1}N(t)$ for $1\leq j\leq N(m(q-1)-\nu)-1$.\\
By (18), for the elements $\theta_{i}^{j}$ in $K_{\nu}$, we have
\begin{center}
$\LT(\theta_{\lambda}^{j})<\LT(\theta_{\mu}^{k}) \;\text{if}\; \lambda > \mu$, and $\LT(\theta_{\mu}^{j})<\LT(\theta_{\mu}^{k})\; \text{if}\; j<k$.
\end{center}
This implies that the elements of $K_{\nu}$ are linearly independant. And by (14), they are elements of $C_{\nu}(m,q)$.\\
It is clear that $\card(K_{\nu})=1+\sum_{i=1}^{\nu}N(m(q-1)-i)$.\\
And since
\begin{equation*}
\begin{aligned}
N(m(q-1)-i)&=\deg(f_{i-1}(z))-\deg(f_{i}(z))\\&=(q^{m}-1-\deg(f_{i}(z)))-(q^{m}-1-\deg(f_{i-1}(z)))\\&=
\dim_{\mathbb{F}_{q}}(HC_{i}(m,q))-\dim_{\mathbb{F}_{q}}(HC_{i-1}(m,q))\\&=\dim_{\mathbb{F}_{q}}(C_{i}(m,q))-\dim_{\mathbb{F}_{q}}(C_{i-1}(m,q)),
\end{aligned}
\end{equation*}
then $\card(K_{\nu})=\dim_{\mathbb{F}_{q}}(C_{\nu}(m,q))$.
\end{proof}
For $m=1$, we consider the case of the Reed-Solomon codes $C_{\nu}(1,q)$ of length $q$ and of order $\nu$ over $\mathbb{F}_{q}$.
\begin{cor}
$C_{\nu}(1,q)$ is linearly generated over $\mathbb{F}_{q}$ by the set\\ $\left\{\theta_{i}^{0}\mid 0\leq i\leq \nu \right\}$ where $0\leq \nu\leq q-2$.
\end{cor}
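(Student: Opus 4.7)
The proof is a direct specialization of Theorem~\ref{propal} to the single-variable setting $m=1$, so the plan is to evaluate the combinatorial data appearing in that theorem at $m=1$ and verify that the description of $K_{\nu}$ collapses to $\{\theta_{i}^{0}\mid 0\leq i\leq \nu\}$.

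First I would analyze $N(t)$ in the case $m=1$. Since $[0,q^{m}-1]=[0,q-1]$, every integer $k$ in this range has a one-digit $q$-adic expansion $k=k_{0}$ with $\omega_{q}(k)=k_{0}=k$, so there is exactly one integer of $q$-weight $t$ for each $0\leq t\leq q-1$. Hence $N(t)=1$ on this range (and $0$ otherwise).

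Next I would substitute into the indexing of $K_{\nu}$ from Theorem~\ref{propal}. For $1\leq i\leq \nu\leq q-2$ we have $m(q-1)-i=q-1-i\in[1,q-2]$, so $N(m(q-1)-i)=1$, and therefore the condition $0\leq j\leq N(m(q-1)-i)-1$ forces $j=0$. Plugging this into the second item of Theorem~\ref{propal} yields
\begin{equation*}
K_{\nu}=\{\theta_{0}^{0}\}\cup\{\theta_{i}^{0}\mid 1\leq i\leq \nu\}=\{\theta_{i}^{0}\mid 0\leq i\leq \nu\},
\end{equation*}
and by Theorem~\ref{propal} this set linearly generates $C_{\nu}(1,q)$ over $\mathbb{F}_{q}$.

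The only edge case worth noting is $\nu=0$, where the second branch of $K_{\nu}$ is empty and the claim reduces to the first item of Theorem~\ref{propal}, namely that $C_{0}(1,q)$ is generated by $\theta_{0}^{0}$; this matches $\{\theta_{i}^{0}\mid 0\leq i\leq 0\}$ trivially. I do not anticipate any real obstacle here, since the whole argument amounts to evaluating $N(t)$ at $m=1$ and reading off the resulting index set.
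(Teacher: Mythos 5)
Your proposal is correct and matches the paper's intent exactly: the corollary is stated as an immediate consequence of Theorem~6.1, and your computation that $N(t)=1$ for $0\leq t\leq q-1$ when $m=1$, forcing $j=0$ in the index set $K_{\nu}$, is precisely the specialization the paper leaves implicit. The handling of the edge case $\nu=0$ via the first item of the theorem is also consistent.
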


\section{Example: the GRM codes of lenghth $16$ over $\mathbb{F}_{4}$ in $\mathbb{F}_{4}[GR(2^{2},2)]$}
Let $G=\left\{0,g_{0},g_{1},\ldots,g_{14}\right\}$ be the additive group of the Galois ring $GR(2^{2},2)$ with $0<g_{0}<g_{1}<\ldots<g_{14}$.\\
\underline{$C_{0}(2,4)$:}\\
We have $C_{0}(2,4)=\varphi(HC_{0}(2,4))$ with $HC_{0}(2,4)=(f_{0}(z))$ and 
\begin{equation*}
f_{0}(z)=\prod_{0<\omega_{4}(i)\leq 5}(z-\gamma^{i})=\prod_{i=1}^{14}(z-\gamma^{i})=1+z+z^{2}+\ldots +z^{14}
\end{equation*}
where $\gamma$ is a primitive element of $\mathbb{F}_{16}$.
We have $\theta_{0}^{0}=\varphi(f_{0}(z))=X^{0}+X^{g_{0}}+X^{g_{1}}+\ldots +X^{g_{14}}$. Thus, $\LT(\theta_{0}^{0})=X^{g_{14}}$.\\
And by (15) and (16), we have\\ $\dim_{\mathbb{F}_{4}}(C_{0}(2,4))=\dim_{\mathbb{F}_{4}}(HC_{0}(2,4))=15-\deg(f_{0}(z))=15-14=1$.\\
Thus, $C_{0}(2,4)$ is linearly generated over $\mathbb{F}_{4}$ by $\theta_{0}^{0}$.\\
\underline{$C_{1}(2,4)$:}\\
We have $C_{1}(2,4)=\varphi(HC_{1}(2,4))$ with $HC_{1}(2,4)=(f_{1}(z))$ and 
\begin{equation*}
f_{1}(z)=\prod_{0<\omega_{4}(i)\leq 4}(z-\gamma^{i}).
\end{equation*}
Consider $N(5):=\card(\left\{k\in\left[0,15\right]\mid \omega_{4}(k)=5\right\})=\card(\left\{11,14\right\})=2$.\\
We have $\deg(f_{1}(z))=\deg(f_{0}(z))-N(5)=14-2=12$.\\
And by (15) and (16),\\ $\dim_{\mathbb{F}_{4}}(C_{1}(2,4))=\dim_{\mathbb{F}_{4}}(HC_{1}(2,4))=15-\deg(f_{1}(z))=15-12=3$.
On the other hand, by (18),\\
since $\theta_{1}^{0}=\varphi(f_{1}(z))$, then $\LT(\theta_{1}^{0})=X^{g_{12}}$,\\
and since $\theta_{1}^{1}=\varphi(zf_{1}(z))$, then $\LT(\theta_{1}^{1})=X^{g_{13}}$.\\
By (14), we have $C_{0}(2,4)\subset C_{1}(2,4)$. Thus, $\left\{\theta_{0}^{0},\theta_{1}^{0},\theta_{1}^{1}\right\}=K_{1}$ is a linearly independent set in $C_{1}(2,4)$. Then it is a linear basis of $C_{1}(2,4)$.\\
\underline{$C_{2}(2,4)$:}\\
We have $C_{2}(2,4)=\varphi(HC_{2}(2,4))$ with $HC_{2}(2,4)=(f_{2}(z))$ and 
\begin{equation*}
f_{2}(z)=\prod_{0<\omega_{4}(i)\leq 3}(z-\gamma^{i}).
\end{equation*}
Consider $N(4):=\card(\left\{k\in\left[0,15\right]\mid \omega_{4}(k)=4\right\})=\card(\left\{7,10,13\right\})=3$.\\
We have $\deg(f_{2}(z))=\deg(f_{1}(z))-N(4)=12-3=9$.\\
And by (15) and (16),\\ $\dim_{\mathbb{F}_{4}}(C_{2}(2,4))=\dim_{\mathbb{F}_{4}}(HC_{2}(2,4))=15-\deg(f_{2}(z))=15-9=6$.
On the other hand, by (18),\\
- since $\theta_{2}^{0}=\varphi(f_{2}(z))$, then $\LT(\theta_{2}^{0})=X^{g_{9}}$,\\
- since $\theta_{2}^{1}=\varphi(zf_{2}(z))$, then $\LT(\theta_{2}^{1})=X^{g_{10}}$,\\
- since $\theta_{2}^{2}=\varphi(z^{2}f_{2}(z))$, then $\LT(\theta_{2}^{2})=X^{g_{11}}$.\\
By (14), we have $C_{0}(2,4)\subset C_{1}(2,4)\subset C_{2}(2,4)$. Thus, $\left\{\theta_{0}^{0},\theta_{1}^{0},\theta_{1}^{1},\theta_{2}^{0},\theta_{2}^{1},\theta_{2}^{2}\right\}=K_{2}$ is a linearly independent set in $C_{2}(2,4)$. Then it is a linear basis of $C_{2}(2,4)$.\\
\underline{$C_{3}(2,4)$:}\\
We have $C_{3}(2,4)=\varphi(HC_{3}(2,4))$ with $HC_{3}(2,4)=(f_{3}(z))$ and 
\begin{equation*}
f_{3}(z)=\prod_{0<\omega_{4}(i)\leq 2}(z-\gamma^{i}).
\end{equation*}
Consider $N(3):=\card(\left\{k\in\left[0,15\right]\mid \omega_{4}(k)=3\right\})=\card(\left\{3,6,9,12\right\})=4$.\\
We have $\deg(f_{3}(z))=\deg(f_{2}(z))-N(3)=9-4=5$.\\
And by (15) and (16),\\ $\dim_{\mathbb{F}_{4}}(C_{3}(2,4))=\dim_{\mathbb{F}_{4}}(HC_{3}(2,4))=15-\deg(f_{3}(z))=15-5=10$.
On the other hand, by (18),\\
- since $\theta_{3}^{0}=\varphi(f_{3}(z))$, then $\LT(\theta_{3}^{0})=X^{g_{5}}$,\\
- since $\theta_{3}^{1}=\varphi(zf_{3}(z))=6$, then $\LT(\theta_{3}^{1})=X^{g_{6}}$,\\
- since $\theta_{3}^{2}=\varphi(z^{2}f_{3}(z))=7$, then $\LT(\theta_{3}^{2})=X^{g_{7}}$,\\
and since $\theta_{3}^{3}=\varphi(z^{3}f_{3}(z))=8$, then $\LT(\theta_{3}^{3})=X^{g_{8}}$.\\
By (14), we have $C_{0}(2,4)\subset C_{1}(2,4)\subset C_{2}(2,4)\subset C_{3}(2,4)$. Thus,\\ $\left\{\theta_{0}^{0},\theta_{1}^{0},\theta_{1}^{1},\theta_{2}^{0},\theta_{2}^{1},\theta_{2}^{2},\theta_{3}^{0},\theta_{3}^{1},\theta_{3}^{2},\theta_{3}^{3}\right\}=K_{3}$ is a linearly independent set in $C_{3}(2,4)$. Then it is a linear basis of $C_{3}(2,4)$.\\
\underline{$C_{4}(2,4):$}\\
We have $C_{4}(2,4)=\varphi(HC_{4}(2,4))$ with $HC_{4}(2,4)=(f_{4}(z))$ and 
\begin{equation*}
f_{4}(z)=\prod_{0<\omega_{4}(i)\leq 1}(z-\gamma^{i}).
\end{equation*}
Consider $N(2):=\card(\left\{k\in\left[0,15\right]\mid \omega_{4}(k)=2\right\})=\card(\left\{2,5,8\right\})=3$.\\
We have $\deg(f_{4}(z))=\deg(f_{3}(z))-N(2)=5-3=2$.\\
And by (15) and (16),\\ $\dim_{\mathbb{F}_{4}}(C_{4}(2,4))=\dim_{\mathbb{F}_{4}}(HC_{4}(2,4))=15-\deg(f_{4}(z))=15-2=13$.
On the other hand, by (18),\\
- since $\theta_{4}^{0}=\varphi(f_{4}(z))$, then $\LT(\theta_{4}^{0})=X^{g_{2}}$,\\
- since $\theta_{4}^{1}=\varphi(zf_{4}(z))=3$, then $\LT(\theta_{4}^{1})=X^{g_{3}}$,\\
- since $\theta_{4}^{2}=\varphi(z^{2}f_{4}(z))=4$, then $\LT(\theta_{4}^{2})=X^{g_{4}}$.\\
By (14), we have $C_{0}(2,4)\subset C_{1}(2,4)\subset C_{2}(2,4)\subset C_{3}(2,4)\subset C_{4}(2,4)$. Thus,\\ $\left\{\theta_{0}^{0},\theta_{1}^{0},\theta_{1}^{1},\theta_{2}^{0},\theta_{2}^{1},\theta_{2}^{2},\theta_{3}^{0},\theta_{3}^{1},\theta_{3}^{2},\theta_{3}^{3},\theta_{4}^{0},\theta_{4}^{1},\theta_{4}^{2}\right\}=K_{4}$ is a linearly independent set in $C_{4}(2,4)$. Then it is a linear basis of $C_{4}(2,4)$.\\
\underline{$C_{5}(2,4)$:}\\
We have $C_{5}(2,4)=\varphi(HC_{5}(2,4))$ with $HC_{5}(2,4)=(f_{5}(z))$ and 
\begin{equation*}
f_{5}(z)=\prod_{0<\omega_{4}(i)\leq 0}(z-\gamma^{i})=1.
\end{equation*}
Then $\deg(f_{5}(z))=0$.\\
And by (15) and (16),\\ $\dim_{\mathbb{F}_{4}}(C_{5}(2,4))=\dim_{\mathbb{F}_{4}}(HC_{5}(2,4))=15-\deg(f_{5}(z))=15$.
On the other hand, by (18),\\
- since $\theta_{5}^{0}=\varphi(f_{5}(z))$, then $\LT(\theta_{5}^{0})=X^{g_{0}}$,\\
- since $\theta_{5}^{1}=\varphi(zf_{5}(z))=1$, then $\LT(\theta_{5}^{1})=X^{g_{1}}$.\\
By (14), we have $C_{0}(2,4)\subset C_{1}(2,4)\subset C_{2}(2,4)\subset C_{3}(2,4)\subset C_{4}(2,4)\subset C_{5}(2,4)$. Thus,\\ $\left\{\theta_{0}^{0},\theta_{1}^{0},\theta_{1}^{1},\theta_{2}^{0},\theta_{2}^{1},\theta_{2}^{2},\theta_{3}^{0},\theta_{3}^{1},\theta_{3}^{2},\theta_{3}^{3},\theta_{4}^{0},\theta_{4}^{1},\theta_{4}^{2},\theta_{5}^{0},\theta_{5}^{1}\right\}=K_{5}$ is a linearly independent set in $C_{5}(2,4)$. Then it is a linear basis of $C_{5}(2,4)$.

\end{document}